\documentclass[10pt,twocolumn, final, conference]{IEEEtran}
\usepackage{amssymb, amsmath, graphicx, cite, color, epsfig, subfigure}



\newcommand{\nr} { {\sf r} }

\newcommand{\none} { {\sf 1} }
\newcommand{\ntwo} { {\sf 2} }

\newtheorem{thm}{Theorem}[section]

\newtheorem{theorem}{Theorem}
\newtheorem{lemma}[theorem]{Lemma}

\parindent=6pt

\allowdisplaybreaks[1]

\begin{document}

\title{List decoding for nested lattices and applications to relay channels}
\author{Yiwei Song and Natasha Devroye
\thanks{Yiwei Song and
Natasha Devroye are with the Department of Electrical and Computer
Engineering, University of Illinois at Chicago, Chicago, IL 60607.
Email: ~ 	ysong34, devroye@uic.edu.} }

\maketitle

\begin{abstract}
 We demonstrate a decoding scheme for nested lattice codes which is able to decode a list of a particular size which contains the transmitted codeword with high probability.  This list decoder is analogous  to that used in random coding arguments in achievability schemes of relay channels,  and allows for the effective combination of information from the relay and source node.
   Using this list decoding result, we demonstrate 1) that lattice codes may achieve the capacity of the physically degraded AWGN relay channel, 2) an achievable rate region for the two-way relay channel with direct links using lattice codes, and 3) that we may improve the constant gap to capacity for specific cases of the two-way relay channel with direct links.
\end{abstract}

\section{Introduction}

\subsection{Motivation.}

Lattice codes have shown great recent promise in multi-user additive white Gaussian noise (AWGN) channels. While they are well known to be both good source and channel codes for Gaussian sources and channels respectively for point-to-point systems \cite{Erez:2004}, they are able to achieve capacity in certain multi-terminal AWGN channels as well including AWGN broadcast channels \cite{zamir2002nested} and multiple access channels \cite{nazer2009compute}. In three user (with logical extensions to $>3$ users) interference channels, their structure has enabled the decoding of  (portions of) ``sums of interference'' terms \cite{bresler_tse:2008, jafar:very_strong_IC}, allowing receivers to ``subtract'' off this sum rather than try to decode individual interference terms in order to remove them.

Lattices have also been of significant use in two-way Gaussian relay channels. The two-way relay channel consists of three nodes: two terminal nodes $\none, \ntwo$ that wish to exchange their two independent messages through the help of one relay node $\nr$. In particular, when no direct link is present between the terminal nodes and all information must pass through the relay, having the terminal nodes employ {\it nested lattice codes}, which ensures that their sum is again a lattice point allows for the sum of the two terminal node signals to be decoded at the relay. Sending this sum (possibly re-encoded) allows the terminal nodes to exploit their own message side-information to recover the other user's message \cite{Narayanan:2008,nam:2009bit}. Gains over decode-and-forward schemes where both terminals transmit simultaneously to the relay (as in full-duplex operation or two-phase MABC protocols \cite{Kim:sarnoff} for half-duplex nodes) stem from the fact that, if using random Gaussian codebooks, the relay will see a multiple-access channel and require the decoding of both individual messages, even though only the sum is needed. In contrast, no multiple-access (or sum-rate) constraint is imposed by the lattice decoding of the sum - leading to rate gains under certain channel conditions.

Lattices have equally found their place in achievability schemes for the multi-way relay channel \cite{gunduz2010multi, sezgin2010divide}, where groups of users wish to exchange messages through a relay. In particular, lattice codes are combined with random codes, superposition and successive decoding schemes to exploit gains similar to those seen for two-way relay channels.

\subsection{Contributions.}

As can be seen from the above applications, lattice codes may in some cases be used ``almost'' like random codes; see \cite{zamir-lattices} for a comprehensive survey of where lattices are useful and sometimes outperform random codes.  Aside from fairly general results on the use of lattices in relay networks \cite{nazer2009compute, nam:2009nested, ozgur2010approximately},  the usage of lattice codes in scenarios where information flows from source to destination along two paths, as in the classical one-way relay channel \cite{Cover:1979} remains relatively unexplored. In the relay channel, one of the most fundamental decode-and-forward schemes is that of Thm.1 of \cite{Cover:1979}, in which Block Markov superposition and random binning are used at the encoder, and joint-typicality-based successive decoding  using  a list decoder is employed at the decoder in order to efficiently merge the information available along the two paths.
Our contributions are: 

\smallskip
\noindent $\bullet$ we define a lattice equivalent of a list decoder;

\smallskip
\noindent $\bullet$ we use this list decoder to show that lattice codes achieve the capacity of the physically degraded AWGN relay channel;

\smallskip
\noindent $\bullet$ we use this to provide smaller constant gap to capacity results for some types of  Gaussian  two-way relay channels with direct links  \cite{Kim:sarnoff, xie2007network, avestimehr2009capacity}. This is the first application of lattice codes to the two-way channel model with direct links. 

\subsection{Paper layout.}

We introduce nested lattice codes and our lattice list decoding scheme in Section \ref{sec:lattice}; in Section \ref{sec:degraded} we show how this may be applied to achieve the capacity of the degraded AWGN relay channel using nested lattice codes; in Section \ref{sec:two-way} we demonstrate improved finite-gap results for the two-way relay channel with direct links. 

\section{A list decoder for nested lattice codes}
\label{sec:lattice}

We first introduce previous work on lattice codes as well as our notation. We then propose an encoding and decoding scheme for ``list decoding'' of  lattice codes of rate $R$ over an AWGN channel of noise power $N$ and transmit power constraint $P$. Finally, we prove that this scheme can decode a list of size $2^{n(R-C(P/N))}$, for $C(x):=\frac{1}{2}\log_2\left(1+x\right)$, of  possible codewords which contains the correct one with high probability as the blocklength $n\rightarrow \infty$.

\subsection{Lattice notation}
\label{sec:notation}

We outline our notation for (nested) lattice codes for transmission over AWGN channels; comprehensive treatments may be found in \cite{loeliger1997averaging, zamir2002nested,  Erez:2004} and in particular \cite{zamir-lattices}; our presentation follows that of  \cite{zamir2002nested, nam:2009nested}.  An $n$-dimensional lattice $\Lambda$ is a discrete subgroup of Euclidean space $\mathbb{R}^n$ (of vectors ${\bf x}$) with Euclidean norm $|| \cdot ||$ under vector addition and may be expressed as all integral combinations of basis vectors ${\bf g_i}\in {\mathbb R}^n$
\[ \Lambda = \{ \lambda = G \; {\bf i}: \; {\bf i}\in \mathbb{Z}^n\},\]
for $\mathbb{Z}$ the set of integers, and $G := [{\bf g_1} | {\bf g_2}| \cdots {\bf g_n}]$ the $n\times n$ generator matrix corresponding to the lattice $\Lambda$. Further define:

$\bullet$ The {\it nearest neighbor lattice quantizer} of $\Lambda$ as \[ Q({\bf x}) = \arg \min_{\lambda\in \Lambda} ||{\bf x}-\lambda||;\]

$\bullet$ The { \texttt{mod }$\Lambda$} operation as ${\bf x}$  \texttt{mod }$\Lambda : = {\bf x} - Q({\bf x})$;

$\bullet$ The {\it fundamental region of $\Lambda$} as the set of all points closer to the origin than to any other lattice point \[\mathcal{V}:= \{{\bf x}:Q({\bf x}) = {\bf 0}\}\]
which is of volume $V: = \mbox{Vol}({\mathcal V})$.

$\bullet$ The {\it second moment per dimension of a uniform distribution over ${\mathcal V}$} as
\[ \sigma^2(\Lambda) : = \frac{1}{V}\cdot \frac{1}{n} \int_{\mathcal V} ||{\bf x}||^2 \; d{\bf x}\]

A sequence of lattices is said to be {\it Polytrev good} (in terms of channel coding over the AWGN channel) if, for ${\bf \overline{Z}}\sim {\cal N}(0,\overline{\sigma}^2{\bf I})$, we have
\[ \Pr\{{\bf \overline{Z}} \notin {\mathcal V}\} \leq e^{-nE_P(\mu)}, \]
 which upper bounds the error probability of nearest lattice point decoding when using lattice points as codewords in the AWGN channel, and for $E_p$ the {\it Polytrev exponent }\cite{poltyrev1994coding} and \[\mu: = \frac{(\mbox{Vol}({\mathcal V}))^{2/n}}{2\pi e\overline{\sigma}^2}.\]
Since $E_p(\mu) > 0$ for $\mu > 1$, a necessary condition for reliable decoding of a single point is $\mu>1$ - thereby relating the size of the fundamental region (and ultimately how many points one can transmit reliably) to the noise power, aligning well with our intuition about Gaussian noise channels.

\subsection{Nested lattice codes}
\label{subsec:nested}

Now consider two lattices $\Lambda$ and $\Lambda_c$ such that $\Lambda \subseteq \Lambda_c$ with fundamental regions ${\cal V}, {\cal V}_c$ of volumes $V, V_c$ respectively. In this case $\Lambda$ is called the {\it coarse} lattice which is a sublattice of  $\Lambda_c$,  the {\it fine} lattice, and hence $V \geq V_c$.  When transmitting over the AWGN channel, the set  $ \mathcal{C}_{\Lambda_c, {\cal V}} = \{ \Lambda_c \cap \mathcal{V} \} $ is used as the codebook. The coding rate $R$ of this {\it nested  $(\Lambda, \Lambda_c)$ lattice code} is defined as
\[ R = \frac{1}{n} \log |\mathcal{C}_{\Lambda_c, {\cal V}}| = \frac{1}{n} \log \frac{V}{V_c},\] where $\rho = |\mathcal{C}_{\Lambda_c, {\cal V}}|^{\frac{1}{n}} = \left( \frac{V}{V_c} \right)^{\frac{1}{n}}$ is the nesting ratio of the nested lattice. Nested lattice codes were shown to be capacity achieving for the AWGN channel \cite{Erez:2004}.

\subsection{Nested lattice chains}

In the following, we will be using an extension of nested lattice codes termed nested lattice chains, as introduced in \cite{nam:2009nested,  nam:2009bit}, and shown  in Fig. \ref{fig:chain}.
We first re-state  a (slightly modified and simplified)  result of \cite{nam:2009nested} in transmitting codewords over an AWGN channel of transmit power constraint $P$, which will be of use in subsequent sections.

\smallskip

\begin{thm}{\textbf{Existence of ``good'' lattice chains (adapted from Theorem 2 of \cite{nam:2009bit}).}}
\label{thm:nam}
 There exists a sequence of $n$-dimensional lattices $\Lambda \subseteq \Lambda_s \subseteq \Lambda_c $ ( $\mathcal{V}  \supseteq \mathcal{V}_s \supseteq \mathcal{V}_c $) satisfying:  \\
a) $\Lambda$ and $\Lambda_s$ are simultaneously Rogers-good (see pg.7 of \cite{nam:2009nested}) and Poltyrev-good while $\Lambda_c$ is Poltyrev-good.\\
b) For any $\epsilon > 0, P-\epsilon \leq \sigma^2(\Lambda) \leq P $. \\
c) The three rates $ R = \frac{1}{n} \log \frac{V}{V_c} $, $R_1 = \frac{1}{n} \log \frac{V}{V_s} $, $R_2 = \frac{1}{n} \log \frac{V_s}{V_c}$ may  approach any values as $n\rightarrow \infty$, with $R = R_1 + R_2 $, i.e. we have two degrees of freedom in choosing $V,V_s,V_c$ (or equivalently  their second moments as all lattices are Polytrev good).
\end{thm}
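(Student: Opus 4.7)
The plan is to construct the chain $\Lambda \subseteq \Lambda_s \subseteq \Lambda_c$ by starting from a good coarse lattice and refining it via a pair of \emph{nested} linear codes over $\mathbb{Z}_p$, in the spirit of the Construction~A approach used in \cite{Erez:2004,nam:2009nested}. First I would invoke the Erez--Litsyn--Zamir existence result to pick $\Lambda$ that is simultaneously Rogers-good and Poltyrev-good, scaled so that $\sigma^2(\Lambda) \in [P-\epsilon, P]$; this immediately disposes of property (b) and of the part of property (a) that concerns $\Lambda$.

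Next I would fix a prime $p$ (to be taken large as $n$ grows) and choose a pair of nested linear codes $C_1 \subseteq C_2 \subseteq \mathbb{Z}_p^n$ of dimensions $k_1 \leq k_2$. Lifting through $p^{-1}\Phi(\cdot)$ and adding the base lattice gives the natural candidates
\[ \Lambda_s \;=\; \Lambda + p^{-1}\Phi(C_1), \qquad \Lambda_c \;=\; \Lambda + p^{-1}\Phi(C_2). \]
The inclusion $C_1 \subseteq C_2$ automatically enforces $\Lambda \subseteq \Lambda_s \subseteq \Lambda_c$, and the volume ratios are $V/V_s = p^{k_1}$ and $V_s/V_c = p^{k_2-k_1}$. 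Tuning $(p,k_1,k_2)$ with $n$ independently controls $R_1$ and $R_2$, so they may approach any prescribed pair of values, delivering property (c).

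The remaining task is to show that one choice of $(C_1,C_2)$ makes $\Lambda_s$ simultaneously Rogers- and Poltyrev-good while keeping $\Lambda_c$ Poltyrev-good. Here I would carry out a Loeliger-style averaging over the uniform ensemble of nested codes $(C_1,C_2)$: for either Construction-A lattice the expected probability that Gaussian noise of the appropriate variance escapes the Voronoi region decays exponentially whenever the corresponding volume-to-noise ratio $\mu$ exceeds the Poltyrev threshold, and an analogous averaging on the covering radius (as in pg.~7 of \cite{nam:2009nested}) secures Rogers-goodness of $\Lambda_s$. A union bound over the three failure events removes only an exponentially small fraction of the ensemble, so a concrete good pair $(C_1,C_2)$ exists.

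The main obstacle is this last step: because the three lattices arise from a single coupled random ensemble, their goodness events are calibrated to very different effective noise scales and cannot be established independently. This joint averaging is essentially the bookkeeping already carried out in the proof of Theorem~2 of \cite{nam:2009bit}; the only novelty required here is to verify that their two-level nested construction simultaneously accommodates the two independent nesting ratios demanded by the two degrees of freedom in (c), which is straightforward once the code dimensions $k_1$ and $k_2$ are allowed to vary independently.
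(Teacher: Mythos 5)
Your construction is legitimate, but it takes a genuinely different route from the paper. The paper does not rebuild the lattice chain at all: it simply invokes Theorem~2 of \cite{nam:2009bit}, which already asserts the existence of a chain $\Lambda \subseteq \Lambda_s \subseteq \Lambda_c$ with the stated goodness properties and \emph{arbitrary second moments}, and then observes that, because every lattice in the chain is Poltyrev-good, its normalized second moment tends to $1/(2\pi e)$, so prescribing second moments is asymptotically equivalent to prescribing the volumes $V, V_s, V_c$ and hence the rates $R_1, R_2$; choosing $\sigma^2(\Lambda_s)$ anywhere between $0$ and $P$ then realizes the two degrees of freedom in (c). Your proposal instead re-derives the underlying existence result from scratch via nested Construction~A over $\mathbb{Z}_p$ and Loeliger-type ensemble averaging. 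That sketch is essentially sound and has the virtue of making the two degrees of freedom completely explicit (they are just the independent code dimensions $k_1$ and $k_2-k_1$), but the step you yourself flag as the obstacle --- establishing, from a single coupled ensemble, that $\Lambda_s$ is simultaneously Rogers- and Poltyrev-good while $\Lambda_c$ stays Poltyrev-good, with the goodness events calibrated to different noise scales --- is precisely the content of Theorem~2 of \cite{nam:2009bit} and \cite{nam:2009nested}, so in the end you lean on the same external result the paper cites, just at a lower level. What the paper's one-paragraph translation buys is brevity; what yours buys is transparency about where the nesting ratios come from, at the cost of having to re-certify the covering-radius averaging that underlies Rogers-goodness of the intermediate lattice, which is the only non-routine part of your argument and the one you have not actually carried out.
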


\smallskip
This may be derived from Theorem 2 of \cite{nam:2009bit} by noting a one-to-one correspondence between the second moments of $\Lambda$, $\Lambda_s$ and $\Lambda_c$ and their volumes $V, V_s, V_c$ as $n\rightarrow \infty$ as all lattices are Polytrev good; as such the arbitrary second moments of these lattices (of \cite{nam:2009bit}'s Theorem 2) may equivalently be regarded as arbitrary volumes, as in the re-statement above.
Setting $\sigma^2(\Lambda_s)$ to be any value between $0$ and $P$ completes the Theorem.


{The lattice chain result of Theorem 2 in \cite{nam:2009bit} is generalized to a chain of length $K$ in Theorem 2 in \cite{nam:2009nested}; an alternative construction is provided in \cite{nazer2009compute}. We may similarly generalize the result of Theorem \ref{thm:nam} to a chain of  length $K$: for the sequences of $n$-dimensional lattice chains (dimension $n$ left out for simplicity):
$\Lambda_1 \subseteq \Lambda_2 \subseteq \dots \subseteq \Lambda_{K-1} \subseteq \Lambda_K$,
the coding rates of all the nested pairs $R_{ij} = \frac{1}{n} \log \frac{V_i}{V_j} ( 1\leq i \leq j \leq K) $ may approach any values as $n\rightarrow \infty$; we note that due to their definition there are only $K-1$ degrees of freedom in the choice of volumes $V_i$ (or equivalently second moments as all are Polytrev good).

\begin{figure}
\centering
\includegraphics[width=6cm]{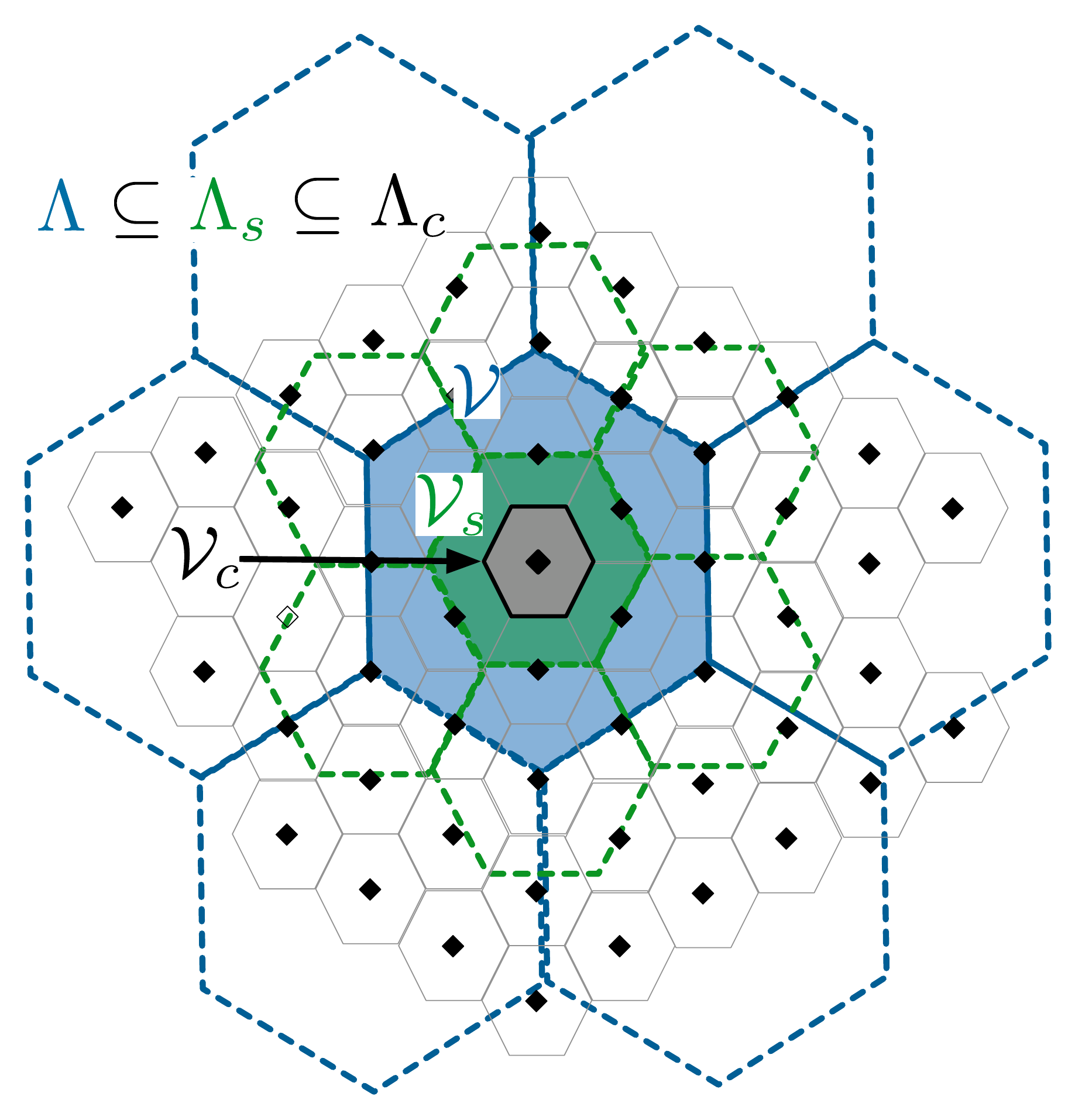}
\caption{A lattice chain $\Lambda\subseteq\Lambda_s \subseteq \Lambda_c$ with corresponding fundamental regions  $\mathcal{V}  \supseteq \mathcal{V}_s \supseteq \mathcal{V}_c $ of volumes $V\geq V_s\geq V_c$. Color is useful.}
\label{fig:chain}
\end{figure}

\subsection{A lattice list decoder}

List decoding refers to a decoding procedure in which, instead of outputting a single codeword corresponding to a single message, the decoder outputs a list of possible codewords which includes the correct (transmitted) one.

We now formalize what we mean by a lattice list decoder for transmitting message of rate $R$ over the AWGN channel ${\bf Y}={\bf X}+{\bf Z}$ where ${\bf Z} \sim {\cal N}(0,N)$, and the channel input ${\bf X}$ is subject to the average power constraint $P$. 
We consider a nested lattice chain  $\Lambda\subseteq\Lambda_s \subseteq \Lambda_c$ as in Section \ref{sec:notation}, Fig. \ref{fig:chain}  and Thm. \ref{thm:nam}.  


\smallskip


{\bf Encoding:} The message set $\{1,\dots,2^{nR}\}$ is one-to-one mapped to $\mathcal{C}_{\Lambda_c, {\cal V}}$. Thus, to transmit a message, the  transmitter chooses the  ${\bf t} \in \mathcal{C}_{\Lambda_c, {\cal V}}$ associated with the message and sends ${\bf X} = ({\bf t} - {\bf U}) \mod \Lambda$, where ${\bf U}$ is a dither signal (known to both the encoder and decoder) uniformly distributed over $\mathcal{V}$.

{\bf Decoding:} Upon receiving {\bf Y}, the receiver computes
\begin{align*}
{\bf Y^\prime} &= (\alpha {\bf Y} + {\bf U}) \mod \Lambda \\
&= ({\bf t} - (1-\alpha){\bf X} + \alpha {\bf Z}) \mod \Lambda \\
&= ({\bf t} + (- (1-\alpha){\bf X} + \alpha {\bf Z} )\mod \Lambda ) \mod \Lambda \\
&= ({\bf t} + {\bf Z^\prime}) \mod \Lambda,
\end{align*}
for $\alpha\in \mathbb{R}$. Choose $\alpha$ to be the MMSE coefficient $\alpha = \frac{P}{P+N} $ and then ${\bf Z^\prime} = (-(1-\alpha){\bf X} + \alpha {\bf Z} ) \mod \Lambda $. Again notice the equivalent noise ${\bf Z^\prime}$ is independent of ${\bf t}$ and $\Lambda_c$.\\
The receiver decodes the {\it list}  of possible codewords 
\begin{equation}
L ({\bf \hat{t}}) := \{{\bf \hat{t}} \;  | \; {\bf \hat{t}} \in  S_{\mathcal{V}_s,\Lambda_c} ({\bf Y^\prime}) \mod \Lambda\},
\label{eq:listdecoding}\end{equation}
where $ S_{\mathcal{V}_s,\Lambda_c}(\overline{\bf x}) := \{ \Lambda_c \cap (\overline{\bf x}+\mathcal{V}_s) \},$
the set of lattice points $\lambda \in \Lambda$ inside the fundamental region ${\cal V}$ centered at the point $\overline{\bf x}$ as shown in Fig. \ref{fig:encdec}.

\smallskip

{\bf Probability of error for list decoding:} In decoding a list, we require that the correct or transmitted codeword lies in the list with high probability as $n\rightarrow \infty$, i.e. the probability of error is $P_e : = \Pr\{{\bf t} \notin L(\bf \hat{t})\}$, which should be made vanishingly small as $n\rightarrow \infty$.

\begin{figure*}
\centering
\includegraphics[width=15cm]{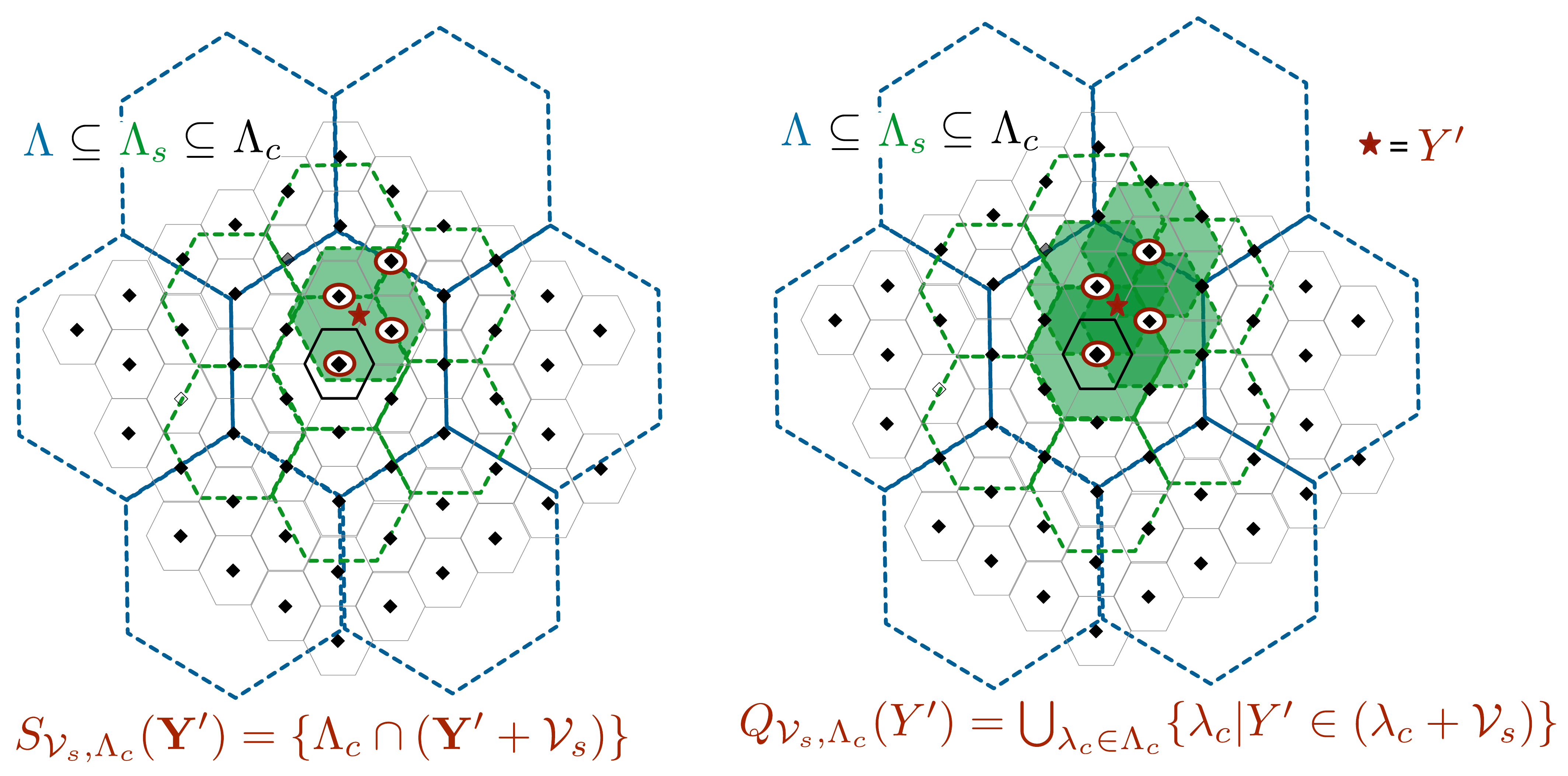}
\vspace{-0.2cm}
\caption{The two equivalent lists, in this example consisting of the four points encircled in red. The correct message  lattice point is the center. Color is useful.}
\label{fig:encdec}
\end{figure*}

\smallskip
\begin{thm}{\textbf{Lattice list decoding.}}
\label{thm:list}
Using the encoding and decoding scheme defined above, the receiver decodes a list of codewords of size   $2^{n(R-C(P/N))}$ with probability of error $P_e \rightarrow 0$ as $n \rightarrow \infty $.
\end{thm}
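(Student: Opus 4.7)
The plan is to reduce the list-decoding error analysis to a Poltyrev-style decoding of the modulo-$\Lambda$ effective noise ${\bf Z}'$ against the intermediate lattice $\Lambda_s$, and the list-size count to a standard nested-lattice point count.

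\textit{Error event and list size.} From the identity ${\bf Y}' = ({\bf t} + {\bf Z}')\bmod \Lambda$ derived in the decoding description, the codeword ${\bf t}$ lies in $L(\hat{\bf t})$ if and only if ${\bf Y}' - {\bf t} \in \mathcal{V}_s\pmod \Lambda$; by the point-symmetry of the Voronoi cell $\mathcal{V}_s$ this is equivalent to $({\bf Z}' \bmod \Lambda) \in \mathcal{V}_s$, so $P_e = \Pr\{{\bf Z}' \bmod \Lambda \notin \mathcal{V}_s\}$. For the list size, the nesting $\Lambda_s \subseteq \Lambda_c$ makes every translate ${\bf Y}' + \mathcal{V}_s$ contain exactly $V_s/V_c$ points of $\Lambda_c$; two of these can collapse under the final mod-$\Lambda$ only if their difference lies in $(\mathcal{V}_s - \mathcal{V}_s) \cap (\Lambda\setminus\{{\bf 0}\})$, which under the parameter choice below will be empty for all sufficiently large $n$. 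Hence $|L(\hat{\bf t})| = V_s/V_c = 2^{nR_2}$.

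\textit{Parameter choice and Poltyrev bound.} Invoke Theorem~\ref{thm:nam} to build the chain so that $\Lambda$ is Rogers-good, $\sigma^2(\Lambda)\to P$, and
\[
R_1 = \tfrac{1}{n}\log(V/V_s) = C(P/N) - \delta, \qquad R_2 = R - R_1,
\]
for an arbitrarily small $\delta>0$. Rogers-goodness yields $V^{2/n}\to 2\pi e P$, while the MMSE choice $\alpha = P/(P+N)$ makes the unfolded noise $\tilde{\bf Z} := -(1-\alpha){\bf X} + \alpha {\bf Z}$ have per-dimension second moment $N_{\mathrm{eff}} = PN/(P+N)$. Because the dither forces ${\bf X}$ to be uniform over $\mathcal{V}$ and independent of ${\bf Z}$, the standard Erez--Zamir argument \cite{Erez:2004} shows that $\tilde{\bf Z}\bmod\Lambda$ is, for purposes of Poltyrev-style bounds, stochastically dominated by an \iid Gaussian of variance $N_{\mathrm{eff}}(1+o(1))$. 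The corresponding volume-to-noise ratio for $\Lambda_s$ is
\[
\mu_s = \frac{V_s^{2/n}}{2\pi e N_{\mathrm{eff}}} \longrightarrow \frac{2\pi e P\cdot 2^{-2R_1}}{2\pi e\cdot PN/(P+N)} = 2^{2(C(P/N)-R_1)} = 2^{2\delta} > 1,
\]
so Poltyrev-goodness of $\Lambda_s$ delivers $P_e \leq e^{-n E_p(\mu_s) + o(n)} \to 0$. Letting $\delta\to 0^+$ recovers the advertised list size $2^{n(R-C(P/N))}$.

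\textit{Main obstacle.} The only non-routine ingredient is the claim that the folded noise ${\bf Z}' = \tilde{\bf Z}\bmod\Lambda$ really inherits the Poltyrev exponent of an \iid Gaussian of variance $N_{\mathrm{eff}}$. This is precisely the Erez--Zamir ``goodness'' lemma underlying capacity-achievability of nested-lattice codes on the AWGN channel, and it uses crucially the Rogers-goodness of $\Lambda$ supplied by Theorem~\ref{thm:nam}. Once that step is granted, the remainder---counting $\Lambda_c$-points in a translate of $\mathcal{V}_s$, tuning $R_1$ to push $\mu_s$ above $1$, and invoking the Poltyrev exponent of $\Lambda_s$---is bookkeeping.
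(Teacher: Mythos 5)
Your proposal is correct and follows essentially the same route as the paper: the point-symmetry argument reducing list membership of ${\bf t}$ to the event $\{{\bf Z}'\in\mathcal{V}_s\}$ is exactly the paper's ``equivalent decoding list'' lemma, the error bound is the same appeal to the Erez--Zamir Poltyrev-exponent result for the folded MMSE noise against $\Lambda_s$ under the condition $R_1 < C(P/N)$, and the list-size count $V_s/V_c$ with $V_s$ tuned to $(N/(P+N))^{n/2}V$ matches the paper's choice. Your explicit check that distinct points of $\Lambda_c\cap({\bf Y}'+\mathcal{V}_s)$ do not collapse under the final $\bmod\ \Lambda$ is a small detail the paper leaves implicit, but otherwise the two arguments coincide.
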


\begin{proof}
We assume $R > C(P/N)$; When $R \leq C(P/N)$, the decoder can decode an unique codeword with high probability, which was proven in \cite{Erez:2004}.  
In order to prove the above theorem, we will need the following Lemma.

\begin{lemma}{\textbf{Equivalent decoding list.}}
\label{lem:equivalent}
The sets $S_{\mathcal{V}_s,\Lambda_c} ({\bf Y^\prime}) \mod \Lambda$ and $Q_{\mathcal{V}_s,\Lambda_c} ({\bf Y^\prime}) \mod \Lambda$  shown in Fig. \ref{fig:encdec} are equal, where 
\begin{equation}
 Q_{\mathcal{V}_s,\Lambda_c} ({\bf Y^\prime}) := \bigcup_{\lambda_c \in \Lambda_c} \{\lambda_c|{\bf Y^\prime} \in (\lambda_c + \mathcal{V}_s) \}.\label{eq:listdecoding2}
 \end{equation}\end{lemma}
\begin{proof}
We first note that $Q_{\mathcal{V}_s,\Lambda_c} ({\bf Y^\prime}) $ is the set of  $\lambda_c \in \Lambda_c$ points satisfying ${\bf Y^\prime} \in (\lambda_c + \mathcal{V}_s)$. This Lemma will allow us to more easily bound the probability of list decoding error.
First, we note that the fundamental region $\mathcal{V}$ of any lattice $\Lambda$ is centro-symmetric ($\forall x \in \mathcal{V}$, we have that $-x \in \mathcal{V}$) by definition of a lattice and fundamental region (alternatively, see Ch. 4 of \cite{Coppel}). Hence,  for any two points $x$ and $x^\prime$, and a centro-symmetric region $\mathcal{V}$, 
$x^\prime \in x + \mathcal{V} \Leftrightarrow x \in x^\prime + \mathcal{V}$. Applying this to $ S_{\mathcal{V}_s,\Lambda_c}(\overline{\bf x})$ and $Q_{\mathcal{V}_s,\Lambda_c} ({\bf Y^\prime}) $ yields the lemma.
\end{proof}

 We continue with the proof of Thm. \ref{thm:list}. We first (a) use Lemma \ref{lem:equivalent} to see that the lists $S_{\mathcal{V}_s,\Lambda_c} ({\bf Y^\prime}) \mod \Lambda$ and $Q_{\mathcal{V}_s,\Lambda_c} ({\bf Y^\prime}) \mod \Lambda$ are equal. Next we show that
(b) the probability of error of decoding operation $Q_{\mathcal{V}_s,\Lambda_c} ({\bf Y^\prime}) \mod \Lambda$ is
\begin{align}
P_e & = Pr\{ Z^\prime \not \in \mathcal{V}_s \} = e^{ -n( E_p(e^{2(C(P/N) - R_1)}) - o_n(1))} \label{eq:pe}
\end{align}
where $ o_n(1) \rightarrow 0 $ as $n \rightarrow \infty $, $E_p(.)$ is the Poltyrev exponent, and $R_1$ is as defined in Thm. \ref{thm:nam}. 
 Finally, we show that (c) the size of the decoding list is $2^{n(R-C(P/N))}$.

\smallskip

Part (b) follows from our construction of nested lattices according to Thm. \ref{thm:nam} and Theorem 5 of \cite{Erez:2004}. Since $\Lambda$ are both Rogers-good and Poltyrev-good while $\Lambda_s$ is Poltyrev-good, and ${\bf Z^\prime}$ and $\mathcal{V}_s$ are consistent with those in Theorem 5 of \cite{Erez:2004}, all the conditions of Theorem 5 of \cite{Erez:2004} are satisfied and it may thus be applied.

Combining part (a) and (b), we conclude that the probability of error of our decoding operation defined in \eqref{eq:listdecoding} is \eqref{eq:pe}. To ensure $P_e \rightarrow 0$ as $ n \rightarrow \infty $ in \eqref{eq:pe} we need
$ C(P/N) - R_1 > 0, $; recall that
$R_1 = \frac{1}{n} \log ( \frac{V}{V_s} )$.
Combining these, we obtain
\[ V_s > \left(\frac{N}{P+N}\right)^{n/2} \, V.\]
Since $V_c$ may be chosen arbitrarily close to $ \frac{ V}{2^{nR}}$ by Thm. \ref{thm:nam}, the cardinality of the decoded list $L({\bf \hat{t}}) = S_{{\cal V}_s,\Lambda_c} ({\bf Y^\prime}) \mod \Lambda$, in which the true codeword lies with high probability as $n\rightarrow \infty$, may be bounded as 
\begin{align*}
|L ({\bf \hat{t}})| &= \frac{V_s}{V_c} >  \frac{\frac{N^{n/2}V}{(P+N)^{n/2}}}{\frac{ V}{2^{nR}}} = 2^{n(R-C(P/N))}.
\end{align*}
For a given ``good'' lattice chain $\Lambda \subseteq \Lambda_s \subseteq \Lambda_c $ as defined in \ref{thm:nam}, the size of decoded list is fixed (and is not a random variable as in the random-coding based list decoder of \cite{Cover:1979}).
Thus, we may choose  $V_s = \left(\frac{N}{P+N}\right)^{n/2} \, V$, so that the size of decoded list is arbitrarily close to $ 2^{n(R-C(P/N))}$ as $n\rightarrow \infty$.

\end{proof}

\section{Application I of lattice list decoding: the physically degraded Gaussian relay channel}
\label{sec:degraded}

We now show that nested lattice codes may achieve the capacity of the physically degraded relay channel; or the decode-and-forward rates of  Theorem 5 of \cite{Cover:1979}.

\subsection{Channel model}

Consider a  relay channel in which node 1, with channel input $X_{1}$ wishes to transmit a message $w\in \{1,2,\cdots, 2^{nR}\}$ to node
2 which has access to the channel output $Y_{2}$ and is aided by a relay with channel input and output $X_{R}$ and $Y_{R}$.  The physically degraded Gaussian relay channel with transmit power $P$ and relay power $P_R$ is described by 
\begin{align}
{\bf Y_{2}} = {\bf X_{1}} + {\bf X_{R}} + {\bf Z_{2}},  \;\;\;\; {\bf Y_{R}} = {\bf X_{1}} + {\bf Z_{R}},
\end{align}
for ${\bf Z_2} = {\bf Z_R} + {\bf Z_2^\prime} $, where ${\bf Z_2}$ and ${\bf Z_R}$ are sequences
of independent identically distributed Gaussian random variables with mean zero and variances $N_R+N$ and $N_R$ respectively.

The capacity of this Gaussian  physically degraded relay channel was obtained in \cite[Thm. 5]{Cover:1979}; the achievability scheme used includes: (1) random coding, (2) list decoding, (3) Slepian-Wolf partitioning, (4) coding for the cooperative multiple-access channel, (5) superposition coding and (6) block Markov encoding combined with (7) successive decoding. We re-derive this rate region, following the steps of \cite{Cover:1979}'s achievability closely, but with lattice codes replacing the random Gaussian coding techniques.  Of particular importance is the lattice version of the list decoder used on \cite[pg.577]{Cover:1979}.

\medskip

\begin{theorem}
\label{thm:degraded}
The capacity of the physically degraded AWGN relay channel 
may be achieved using nested lattice codes. 
\end{theorem}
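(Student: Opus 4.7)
The plan is to lattice-ify Cover and El Gamal's decode-and-forward achievability (Theorem~5 of \cite{Cover:1979}), using the lattice list decoder of Theorem~\ref{thm:list} in place of their joint-typicality list decoder and realising Slepian--Wolf bins as cosets in a lattice chain. Over $B$ blocks of block Markov transmission, in block $b$ the source sends a coherent superposition of a fresh lattice codeword for the new message $w_b$ at power $\alpha P$ and a scaled copy of the relay's codeword at power $\bar\alpha P:=(1-\alpha)P$ conveying the bin index $s_{b-1}$ of the previous message, which the relay has already decoded and is simultaneously re-transmitting.

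Concretely, I would invoke Theorem~\ref{thm:nam} to obtain a lattice chain $\Lambda\subseteq\Lambda_s\subseteq\Lambda_c$ with $\tfrac{1}{n}\log(V/V_c)=R$ and bin size $\tfrac{1}{n}\log(V_s/V_c)=R-C(\alpha P/(N_R+N))$, taking the cosets of $\Lambda_s$ in $\Lambda_c$ as the Slepian--Wolf bins, so that each codeword's ``bin index'' is its $\Lambda_c/\Lambda_s$ coset class; the relay uses an independent nested lattice codebook at power $P_R$ to re-encode $s_{b-1}$. Relay decoding is standard point-to-point: after subtracting its own cooperative contribution from $Y_R[b]$, the relay sees $X_1'[b]+Z_R[b]$, and the decoder of \cite{Erez:2004} succeeds whenever $R<C(\alpha P/N_R)$. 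At the destination I would decode sequentially: in block $b+1$ first extract $s_b$ from the cooperative sum-signal with $X_1'[b+1]$ treated as Gaussian noise (feasible at rate $C((\bar\alpha P+P_R+2\sqrt{\bar\alpha P P_R})/(\alpha P+N_R+N))$), and then in block $b$, subtract the already-known cooperative contribution for $w_{b-1}$ and apply Theorem~\ref{thm:list} to the remainder, obtaining a list of size $2^{n(R-C(\alpha P/(N_R+N)))}$ containing $w_b$ with probability tending to one; by construction the list is in bijection with $\Lambda_c/\Lambda_s$, so intersecting it with $s_b$ isolates $w_b$ uniquely. Summing the two destination rate constraints via the algebraic identity
\[ C\!\left(\tfrac{\alpha P}{N_R+N}\right)+C\!\left(\tfrac{\bar\alpha P+P_R+2\sqrt{\bar\alpha P P_R}}{\alpha P+N_R+N}\right)=C\!\left(\tfrac{P+P_R+2\sqrt{\bar\alpha P P_R}}{N_R+N}\right) \]
reproduces Cover--El Gamal's rate, and optimising over $\alpha\in[0,1]$ together with $R<C(\alpha P/N_R)$ attains capacity.

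The main obstacle will be arranging the coherent superposition: the source must transmit a scaled, exactly aligned copy of the relay's lattice codeword together with its own fresh codeword, so that the total signal meets the power constraint $P$ and each decoding step sees an equivalent noise independent of the target message. My approach is to have source and relay share the dither used for the cooperative codebook and to scale the relay's codeword by $\sqrt{\bar\alpha P/P_R}$ at the source; Theorem~\ref{thm:nam}(b) then gives the desired second moment $\alpha P+\bar\alpha P=P$ up to a vanishing correction, and the dither sharing makes the MMSE-filtered equivalent noises independent of both the fresh and cooperative codewords, so that Theorem~\ref{thm:list} and \cite{Erez:2004} apply verbatim. The bijection between the Theorem~\ref{thm:list} output and $\Lambda_c/\Lambda_s$, needed for the intersection with $s_b$ to pick out a unique codeword, is a short check: since $\mathcal{V}_s$ is a fundamental region of $\Lambda_s$ and hence any translate $Y'+\mathcal{V}_s$ tiles $\mathbb{R}^n$ under $\Lambda_s$, each coset of $\Lambda_s$ in $\Lambda_c$ contributes exactly one point to the list.
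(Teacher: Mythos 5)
Your proposal is correct and follows the same overall architecture as the paper's proof: block Markov superposition with power split $\alpha P$/$\bar\alpha P$, coherent combining of the cooperative codeword between source and relay via shared dithers, point-to-point lattice decoding of the fresh message at the relay giving $R<C(\alpha P/N_R)$, MMSE lattice decoding of the cooperative signal at the destination treating the fresh codeword as noise, then subtraction and application of Theorem~\ref{thm:list} to produce a list of size $2^{n(R-C(\alpha P/(N+N_R)))}$, and finally resolution of the list ambiguity with the forwarded bin index; your rate accounting and the algebraic identity recombining the two destination constraints into $C\bigl((P+P_R+2\sqrt{\bar\alpha P P_R})/(N+N_R)\bigr)$ match the paper. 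The one genuine divergence is the binning: the paper assigns message indices to the $2^{nR_R}$ bins \emph{uniformly at random} (as in Cover--El Gamal and Xie), so that the intersection of the decoded bin with the list is unique only with high probability, requiring the union-bound condition $R-C(\alpha P/(N+N_R))<R_R$; you instead realise the bins \emph{structurally} as the cosets of $\Lambda_s$ in $\Lambda_c$, and your observation that any translate ${\bf Y'}+\mathcal{V}_s$ is a fundamental domain of $\Lambda_s$ — so the list $\Lambda_c\cap({\bf Y'}+\mathcal{V}_s)$ contains exactly one representative of each coset, distinct mod $\Lambda$ since $\Lambda\subseteq\Lambda_s$ — is sound. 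This buys a deterministic intersection step (the binning error event disappears entirely; the only error events are the relay's decoding, the destination's decoding of the bin index, and ${\bf Z'}\notin\mathcal{V}_s$) at the cost of tying the number of bins to the lattice chain parameter $V_s/V_c$, whereas the paper's random binning keeps $R_R$ a free parameter. Both yield the same rate and hence the capacity; your variant is arguably the more natural ``fully lattice'' formulation, while the paper's retains the random-binning step so that the same mechanism carries over verbatim to the two-way relay setting of Section~\ref{sec:two-way}.
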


\subsection{Proof of Thm. \ref{thm:degraded}; achievability of Gaussian physically degraded relay channel capacity with lattice codebooks}

\smallskip
\noindent

{\bf Construction of the codebooks:} Consider two nested lattice codebooks of dimension $n$ (though we will usually understand all vectors to be of dimension $n$ and will thus drop the dependencies on $n$ in our notation): $(\Lambda_1, \Lambda_{c1})$ and $(\Lambda_2, \Lambda_{c2})$ with $\sigma^2(\Lambda_1) = \alpha P $ and $\sigma^2(\Lambda_2) = \bar{\alpha} P$ $ (\bar{\alpha} = 1 - \alpha)$ for $\alpha\in [0,1]$. By Theorem \ref{thm:nam}, a lattice $\Lambda_{s1}$ also exists such that $\Lambda_1 \subseteq \Lambda_{s1}\subseteq \Lambda_{c1}$ -- this will be used in our list decoder at the destination.
Associate in a 1:1 fashion the message set $\{1,2,\dots,2^{nfR}\}$ with the ${\bf t_1} \in \mathcal{C}_1 = \{ \Lambda_{c1} \cap \mathcal{V}_1\}$, and the message set $\{1,2,\dots,2^{nR_R}\}$ with the set $ {\bf t_2} \in \mathcal{C}_2 = \{ \Lambda_{c2} \cap \mathcal{V}_2\}$.
The messages indices (and corresponding codewords) of the first message set are randomly and uniformly assigned
  indices  in $ \{1,2,\cdots, 2^{nR_R}\}$. The relay, the receiver and the transmitter agree on this assignment.
We use Block Markov coding with successive decoding and define $w_{b}$ as the new message index to be sent in block $b$ ($b=1,2,\cdots, B$); $s_{b}$ is the index corresponding to $w_{b-1}$ in $ \{1,2,\cdots, 2^{nR_R}\}$,  where we define $s_1 = 1$.  It is assumed that at the end of block $b-1$, the receiver knows $(w_1,\dots,w_{b-2} )$ and $(s_1,\dots,s_{b-1})$ and the relay knows $(w_1,\dots,w_{b-1} )$ and $(s_1,\dots,s_{b})$. 

\smallskip
\noindent
{\bf Encoding:} The transmitter sends the superposition (sum) of the codewords ${\bf X_{1}}(w_b) = ({\bf t_1}(w_b) - {\bf U_1}(w_b)) \mod \Lambda_1$ and ${\bf X_{2}}(s_b) = ({\bf t_2}(s_b) - {\bf U_2}(s_b)) \mod \Lambda_2$. The relay sends ${\bf X_{R}}(s_b) = \sqrt { \frac{P_R}{\bar{\alpha}P}} {\bf X_2}(s_b) = (\sqrt { \frac{P_R}{\bar{\alpha}P}} {\bf t_2}(s_b) - \sqrt { \frac{P_R}{\bar{\alpha}P}}{\bf U_2}(s_b) ) \mod \sqrt { \frac{P_R}{\bar{\alpha}P}} \Lambda_2 $, for ${\bf U_1}(w_b)$ and ${\bf U_2}(s_b)$ dithers known to all nodes which are i.i.d. and also change from block to block (which we indicate, with some abuse of notation, by a dependence on $s_b$ and $w_b$). 
\smallskip

\noindent
{\bf Decoding:}

1. The relay knows $s_b$ and consequently ${\bf X_2}(s_b)$, and so may decode the message $w_b$ from the received signal $ {\bf Y_{R}} = {\bf X_1}(w_b) + {\bf X_2}(s_b) + {\bf Z_{R}} $ as long as $ R < C(\alpha P/N_R)$,
since the ``good'' nested lattice code pair $(\Lambda_1, \Lambda_{c1})$  can achieve the capacity of the point-to-point channel \cite{Erez:2004}.

2. The receiver can decode $s_b$ from the signal $ {\bf Y_2} = {\bf X_1}(w_b) + {\bf X_2}(s_b) + {\bf X_R}(s_b) + {\bf Z_2}$ as long as
\[ R_R < \frac{1}{2} \log \left( 1 + \frac{ (\sqrt{\bar{\alpha}P} + \sqrt{P_R})^2 }{ \alpha P + N + N_R } \right)\]
since
\begin{align*}
{\bf Y_2} &= {\bf X_1} + {\bf X_2} + {\bf X_R} + {\bf Z_2} \\
&= {\bf X_1} + \left(1 +\sqrt { \frac{P_R}{\bar{\alpha}P}} \right) {\bf X_2} + {\bf Z_2}.
\end{align*}
Now define $\kappa: = (1+\sqrt{P_R/(\bar{\alpha}P_R)})$. Then, ${\bf t^\prime_2} = \kappa {\bf t_2}$, ${\bf U^\prime_2} =\kappa {\bf U_2}$, and $\Lambda^\prime_2 =\kappa  \Lambda_2$, and finally ${\bf X^\prime_2} = \kappa {\bf X_2}$.
Thus $ {\bf Y_2} = {\bf X_1} + {\bf X_2^\prime} + {\bf Z_2} $, and so we may compute
\begin{align*}
{\bf Y^\prime} &= ( \beta {\bf Y} + {\bf U_2}^\prime  ) \mod \Lambda_2^\prime \\
&= ({\bf t_2}^\prime - (1 - \beta) {\bf X_2^\prime} + \beta({\bf X_1} + {\bf Z}) ) \mod \Lambda_2^\prime.
\end{align*}
Choose an appropriate lattice pair $(\Lambda_2, \Lambda_{c2})$ so that $(\kappa \Lambda_2, \kappa \Lambda_{c2})$ (i.e. $(\Lambda_2^\prime, \Lambda_{c2}^\prime)$) is  a ``good'' nested lattice pair  \cite{Erez:2004}. 
  We notice that $\sigma^2(\Lambda_2^\prime) = P^\prime = \kappa^2 \bar{\alpha}P$. 
 As in \cite{Erez:2004}, choose $\beta$ to be the MMSE coefficient $\beta = \beta_{MMSE}= \frac{P^\prime}{P^\prime + \alpha P + N + N_R}$, resulting in the equivalent self-noise of variance
 \[N_{eq} = \frac{P^\prime (\alpha P + N + N_R ) } { P^\prime + \alpha P + N + N_R }.\]
 Thus, ${\bf t_2}^\prime$ and then ${\bf t_2}$ and $s_b$ may be decoded as long as (see  \cite{nazer2009compute} for details of this decoding step and error analysis):
\begin{align*}
R_R &< \frac{1}{2} \log \left( \frac{ P^\prime }{ \frac{P^\prime (\alpha P + N + N_R ) } { P^\prime + \alpha P + N + N_R } } \right) \\
&= \frac{1}{2} \log \left( 1 + \frac{ (\sqrt{\bar{\alpha}P} + \sqrt{P_R})^2 }{ \alpha P + N + N_R } \right).
\end{align*}

3. The receiver now subtracts ${\bf X^\prime_2}$ from ${\bf Y_2}$: ${\bf Y_2} - {\bf X^\prime_2}(s_b) = {\bf X_1}(w_b) + {\bf Z_2}$, and decodes a list  of possible codewords ${\bf t_1}(w_b)$ of  size $2^{n(R - C(\alpha P/(N + N_R)))}$ by the lattice list decoding scheme shown in Section \ref{subsec:nested} (and using the nested lattice chain  $\Lambda_1 \subseteq \Lambda_{s1}\subseteq \Lambda_{c1}$ ).  Here, we choose the nested list decoding lattice $\Lambda_{s1}$ (middle one) to have a fundamental region is of volume $V_{s1} = \left( \frac{N + N_R}{ \alpha P + N + N_R} \right)^{n/2} V_1$ {asymptotically} so that the size of the decoded list is $2^{n(R-C(\alpha P/(N + N_R)))}$.
 This list of messages denoted by $L(\hat{w}_b)$ and will be used in the next block (block $b+1$). To decode $w_{b-1}$, the receiver  intersects the decoded partition $\hat{s}_b$ (which includes a group of possible message indices $\hat{w}_{b-1}$) with the list of possible messages $L(\hat{w}_{b-1})$, and declares a success if there is a unique $w_{b-1}$  in this intersection. Due to the uniform and random partitioning of message indices into the $2^{nR_R}$ ``bins'', this is the case if $R - C(\alpha P/(N+N_R))< R_R$, or
\begin{align*} R & < \frac{1}{2} \log \left( 1 + \frac{\alpha P}{N + N_R} \right) + R_R \\
&< \frac{1}{2} \log \left( 1 + \frac{ P + P_R + 2\sqrt{\bar{\alpha} P P_R} } { N + N_R} \right).
\end{align*}

\begin{figure*}
\centering
\includegraphics[width=14cm]{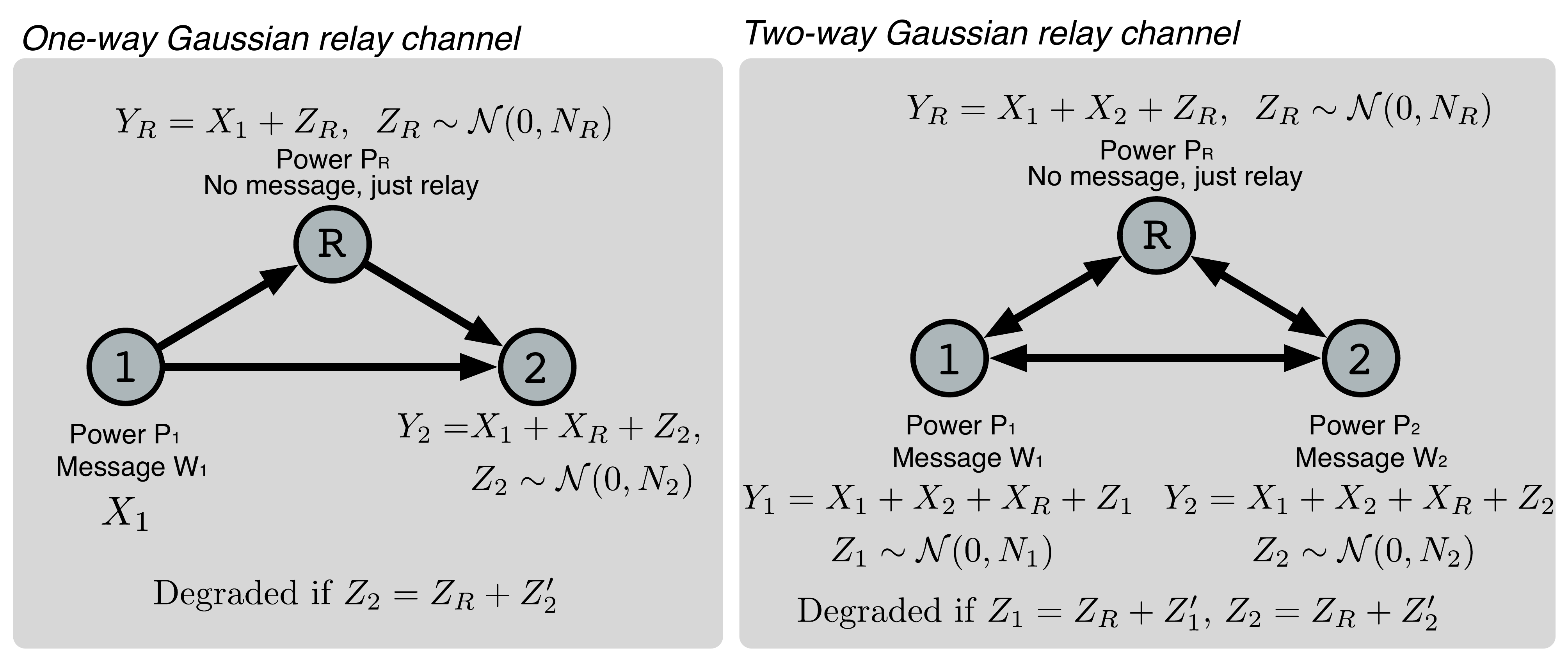}
\caption{The two Gaussian relay channels under consideration in Section \ref{sec:degraded} and Section  \ref{sec:two-way}. Note that each receiver may subtract off its own message.}\label{fig:relay-channel}
\end{figure*}

\section{The two-way relay channel with direct links}
\label{sec:two-way}

The two-way relay channel is the logical extension of the classical relay channel \cite{Cover:1979} for one-way point-to-point  communication aided by a relay to allow for two-way communication.
While the capacity region is in general unknown, it is known for half-duplex channel models under the 2-phase MABC protocol  \cite{Kim:sarnoff} and to within 1/2 bit for the full-duplex Gaussian channel model with no direct links  \cite{nam:2009bit}, and to within 2 bits for the same model with direct links in certain ``symmetric'' cases \cite{avestimehr2009capacity}.

\subsection{Prior achievable rate regions with lattices}

While random coding techniques employing a DF, CF, and AF relays have been the most common in deriving achievable rate regions for the two-way relay channel, a handful of work \cite{Baik:2008, nam:2009bit, Narayanan:2008, ong2010capacity} has considered lattice-based schemes which, in a DF-like setting,  effectively exploit the additive nature of the the Gaussian noise channel in allowing the ``sum''  of the two transmitted lattice points to be decoded at the relay. The intuitive gains of decoding the sum of the messages rather than the individual messages stem from the absence of the classical multiple-access constraints.  This sum-rate point is forwarded to the terminal (which may be re-encoded using a random or lattice code), which utilizes its own-message side-information to ``subtract'' off its own message from the decoded sum.

While random coding schemes have been used in deriving achievable rate regions in the presence of direct links, lattice codes -- of interest in order to exploit the ability to decode the sum of messages at the relay -- have so far not been used. 
 We attack this next using a random binning technique at the relay \cite{xie2007network}, combined with lattice list decoding at the end users.

\subsection{Channel model}

Our channel model consists of two terminal nodes with inputs $X_1, X_2$ with power constraints $P_1, P_2$ (without loss of generality, it is assumed $P_1 \geq P_2$) and outputs $Y_1, Y_2$ which wish to exchange messages with the help of the relay with input $X_R$ of power $P_R$ and output $Y_R$. We assume a memoryless AWGN channel model with direct links between the terminals, described by the input/output relations at each channel use (we drop the index $i$ for simplicity)
\begin{align*}
{\bf Y_1} &= {\bf X_R} + {\bf X_2} + {\bf Z_1}, \;\;\;\; {\bf Z_1} \sim {\cal N}(0,N_1) \\
{\bf Y_2} &= {\bf X_R} + {\bf X_1} + {\bf Z_2}, \;\;\;\; {\bf Z_2} \sim {\cal N}(0,N_2) \\
{\bf Y_R} &= {\bf X_1} + {\bf X_2} + {\bf Z_R}, \;\;\;\; {\bf Z_R} \sim {\cal N}(0,N_R)
\end{align*}
(due to the additive nature of AWGN channel models, we drop the transmitters' own signal for simplicity )
with average input power constraints $P_1,P_2, P_R$, respectively. Note that while the channel gains appear to all be identical, this channel model may be assumed without loss of generality as we allow for arbitrary noise and input powers.
 We call this two-way relay channel {\it physically degraded} if ${\bf Z_1} ={\bf Z_R} +{\bf Z_1'}$ (${\bf Z_1^\prime} \sim {\cal N}(0,N_1^\prime)$) and ${\bf Z_2} = {\bf Z_R} + {\bf Z_2'}$ (${\bf Z_2^\prime} \sim {\cal N}(0,N_2^\prime)$); and {\it stochastically degraded} if $N_1, N_2 \geq N_R$.  

\subsection{A new achievable rate region for the Gaussian two-way relay channel with direct links, lattice codes and list decoding}

In two-way communications (with restricted terminal nodes whose transmissions may not depend on past channel outputs) in the presence of a relay, the relay may be used or ignored in either direction, leading, without loss of generality, to three possible cases: (1) both directions ignore the relay;  (2) both directions use the relay; and  (3) one direction uses the relay while the other ignores it.
Case (1) results in the capacity region of the AWGN two-way channel \cite{Han:1984} given by
\begin{align}
R_i &\leq \frac{1}{2} \log \left( 1 + \frac{P_i}{N_{\bar{i}}} \right), \;\;\; i=1,2. \label{eq:R11}
\end{align}
 We derive a new region for case (2), and leave (3) for future work. 
We note that constant gaps are known for scenarios in which the first two-cases are useful \cite{avestimehr2009capacity}.


\medskip

\begin{theorem}
\label{thm:two-way}
The following rates are achievable for the two-way AWGN relay channel with direct links

{\footnotesize \begin{align}
R_1 & \leq \min \left( \left[\frac{1}{2} \log \left( \frac{P_1}{P_1 + P_2} + \frac{P_1}{N_R}\right)\right]^+ , \frac{1}{2} \log\left(1+\frac{P_1+P_R}{N_2}\right)\right) \label{eq:R21}\\
R_2 &\leq  \min \left( \left[\frac{1}{2} \log \left( \frac{P_2}{P_1 + P_2} + \frac{P_2}{N_R}\right)\right]^+ , \frac{1}{2} \log\left(1+\frac{P_2+P_R}{N_1}\right)\right) \label{eq:R22}.
\end{align}}
\end{theorem}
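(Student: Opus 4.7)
The scheme mirrors the decode-and-forward achievability of Theorem~\ref{thm:degraded}, but replaces individual-message decoding at the relay with compute-and-forward (sum) decoding of lattice codewords. Over $B$ Block-Markov blocks, terminal $i\in\{1,2\}$ encodes its block-$b$ message $W_i^b$ as $X_i^b=(T_i(W_i^b)-U_i)\bmod\Lambda_i$ with $\sigma^2(\Lambda_i)\le P_i$, using a nested chain $\Lambda_i\subseteq\Lambda_{si}\subseteq\Lambda_{ci}$ obtained from the length-$K$ generalization of Theorem~\ref{thm:nam}, chosen jointly with a common coarse lattice so that the sum $T_1\oplus T_2\bmod\Lambda$ is well defined. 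All codeword indices are randomly partitioned into $2^{nR_R}$ bins (a partition known to every node); in block $b{+}1$ the relay transmits an independent random Gaussian codeword $X_R(s^b)$ of power $P_R$, where $s^b$ is the bin index of the sum it decoded in block $b$.

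At the relay, the received $Y_R^b=X_1^b+X_2^b+Z_R$ is decoded to the lattice sum using the compute-and-forward analysis of \cite{nam:2009bit,nazer2009compute}. For asymmetric powers this succeeds whenever
\[ R_i<\left[\tfrac{1}{2}\log\!\left(\tfrac{P_i}{P_1+P_2}+\tfrac{P_i}{N_R}\right)\right]^+,\quad i=1,2,\]
providing the first entries of the minima in \eqref{eq:R21}--\eqref{eq:R22}.

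For the destination step, consider Node~2 recovering $W_1^{b-1}$ (Node~1 follows by symmetry). Having subtracted its own signal, Node~2 sees $Y_2^b=X_R(s^b)+X_1(W_1^b)+Z_2$ in block $b$ and the analogous expression in block $b{-}1$. Decoding proceeds in two parts: (a) in block $b$, decode the bin index $s^b$ from $Y_2^b$ treating the direct-link term as additive noise; by Rogers-goodness of $\Lambda_1$ the dithered lattice signal has second moment $P_1$ and Gaussian is worst-case additive noise, so the standard random-Gaussian achievability yields reliable decoding whenever $R_R<\tfrac{1}{2}\log(1+P_R/(P_1+N_2))$; (b) in block $b{-}1$, after subtracting $X_R(s^{b-1})$, apply Theorem~\ref{thm:list} to $X_1(W_1^{b-1})+Z_2$ with the chain $\Lambda_1\subseteq\Lambda_{s1}\subseteq\Lambda_{c1}$, setting $V_{s1}$ so that the stored list $L(\hat W_1^{b-1})$ has size $2^{n(R_1-C(P_1/N_2))}$ w.h.p. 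Since Node~2 already knows $T_2(W_2^{b-1})$, the bin index $s^b$ pins down the bin of $T_1(W_1^{b-1})$, and intersecting with $L(\hat W_1^{b-1})$ yields a unique message provided $R_1-C(P_1/N_2)<R_R$. Saturating the first bound gives
\[ R_1<\tfrac{1}{2}\log\!\left(1+\tfrac{P_1}{N_2}\right)+\tfrac{1}{2}\log\!\left(1+\tfrac{P_R}{P_1+N_2}\right)=\tfrac{1}{2}\log\!\left(1+\tfrac{P_1+P_R}{N_2}\right),\]
which together with the relay constraint yields \eqref{eq:R21}; \eqref{eq:R22} is identical with indices swapped. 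Block-Markov error accumulation vanishes by letting $n\to\infty$ and then $B\to\infty$.

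\textbf{Main obstacle.} The delicate step is constructing a single joint nested lattice structure that simultaneously supports (i) compute-and-forward sum decoding at the relay under possibly very asymmetric powers $P_1,P_2$, and (ii) independent three-level list-decoding chains $\Lambda_i\subseteq\Lambda_{si}\subseteq\Lambda_{ci}$ at each destination, with the middle lattice $\Lambda_{si}$ having the precise second moment needed to produce a list of the advertised size under the destination noise $N_{\bar i}$ (not the relay noise $N_R$). The length-$K$ generalization cited after Theorem~\ref{thm:nam} supplies the degrees of freedom, but one must verify simultaneous Poltyrev- and Rogers-goodness at all the relevant effective noise levels (the relay's sum-noise at variance $\tfrac{P_1P_2}{P_1+P_2}+\tfrac{P_1P_2}{N_R}$-style residual, and each destination's effective noise after $X_R$ is subtracted) for one common chain. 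The remaining piece, namely justifying that the dithered lattice $X_1$ may be treated as Gaussian for the purpose of bin-index decoding, is standard once Rogers-goodness is in hand.
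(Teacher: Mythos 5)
Your proposal follows essentially the same route as the paper's proof: compute-and-forward decoding of the lattice sum at the relay, random binning of that sum with a Gaussian relay codebook, lattice list decoding over the direct link at each destination, and resolution of the list by intersecting it with the received bin index, yielding exactly the bounds \eqref{eq:R21}--\eqref{eq:R22}. The only cosmetic difference is that you decode the bin index and intersect in two sequential steps (requiring $R_R<I(X_R;Y_2|X_2)$ and $R_1-C(P_1/N_2)<R_R$) whereas the paper folds both into a single joint-typicality-plus-list-membership check giving $R_1<I(X_R;Y_2|X_2)+C(P_1/N_2)$ directly; after saturating $R_R$ these coincide.
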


\begin{proof}
We use  the random binning technique developed by \cite{xie2007network},  nested lattice codes at the terminals,
and the lattice list decoding scheme of Section \ref{sec:lattice}.

{\it Codebook generation:}
By the generalization of Thm. \ref{thm:nam}, which is proved in \cite{nam:2009nested}, there exists a chain of lattices $\Lambda_1 \subseteq \Lambda_2 \subseteq \Lambda_{c1}  \subseteq \Lambda_{c2} $ (or $\Lambda_1 \subseteq \Lambda_2 \subseteq \Lambda_{c2}  \subseteq \Lambda_{c1}$, $\Lambda_1 \subseteq \Lambda_{c1} \subseteq \Lambda_{2}  \subseteq \Lambda_{c2}$: the order of this lattice chain sequence depends on $V_1$, $V_2$, $V_{c1}$, and $V_{c2}$ and consequently on $P_1$, $P_2$, $R_1$ and $R_2$), where $\Lambda_1$ and $\Lambda_2$ are Rogers-good and Poltyrev-good, while $\Lambda_{c1}$ and $\Lambda_{c2}$ are Poltyrev-good and $\sigma^2(\Lambda_1) = P_1$, $\sigma^2(\Lambda_2) = P_2$.
We also note that we may construct / there exist additional lattices $\Lambda_{s1}$ and $\Lambda_{s2}$ that are appropriately nested in order to invoke list decoding at both receivers, i.e.  $\Lambda_1 \subseteq \Lambda_{s1}\subseteq \Lambda_{c1}$  and  $\Lambda_2 \subseteq \Lambda_{s2}\subseteq \Lambda_{c2}$, which will form a lattice chain with 6 lattices, whose order depends on the relative strengths of the channel links.
For terminal 1, associate each message $w_1 \in \{1,\dots,2^{nR_1}\}$ with ${\bf t_1} \in \mathcal{C}_1 = { \Lambda_{c1} \cap \mathcal{V}_1 }$. For terminal 2, associate each message $w_2 \in \{1,\dots,2^{nR_2}\}$ with ${\bf t_2} \in \mathcal{C}_2 = { \Lambda_{c2} \cap \mathcal{V}_2 }$. For the relay, independently generate $2^{nR}$ $n$-sequences $X_R^n$ with components generated i.i.d. according to the Gaussian distribution with mean 0 and variance $P_R$, for $ R\geq \max(I(X_R;Y_2|X_2), I(X_R;Y_1|X_1))$, similar to the type of binning performed in \cite{xie2007network} for broadcasting information with receiver side-information.

{\it Encoding:} Messages $w_{1b}$ and $w_{2b}$ are the messages the two terminals want to send in the block $b$. We use a block Markov transmission strategy where in the $b$-th block, terminal 1 sends ${\bf X_1}(w_{1b}) = ({\bf t_1}(w_{1b}) - {\bf U_1}(w_{1b})) \mod \Lambda_1$, and terminal 2 sends ${\bf X_2}(w_{2b}) = ({\bf t_2}(w_{2b}) - {\bf U_2}) \mod \Lambda_2 $ for dithers ${\bf U_1}, {\bf U_2}$ known to all nodes (which are iid over channel uses and vary from block to block). At the relay,  we assume that it has decoded ${\bf \hat{T}}(b-1) = ({\bf t_1}(w_{1(b-1)}) + {\bf t_2}(w_{2(b-1)}) - Q_2({\bf t_2}w_{2(b-1)}) + {\bf U_2}(w_{2(b-1)}))) \mod \Lambda_1 $ in block $b-1$. ${\bf \hat{T}}$ is thrown uniformly (or binned) into the $2^{nR}$ bins, and $s({\bf \hat{T}})$ is defined as ${\bf \hat{T}}$'s bin index. Terminal 3 sends ${\bf X_R}(s({\bf \hat{T}}(b-1)))$.

{\it Decoding:} At the end of each block $b$, the relay terminal can decode ${\bf T}(b) = ({\bf t_1}(w_{1b}) + {\bf t_2}(w_{2b}) - Q_2({\bf t_2}(w_{2b}) + {\bf U_2}(w_{2b}))) \mod \Lambda_1 $ as long as

{\footnotesize \begin{align*}
R_1 &\leq \frac{1}{2} \log \left( \frac{P_1}{P_1 + P_2} + \frac{P_1}{N_R}\right),  \;\; R_2 \leq \frac{1}{2} \log \left( \frac{P_2}{P_1 + P_2} + \frac{P_2}{N_R}\right).
\end{align*}}

\vspace{-0.1in}
This follows from arguments similar to those in \cite{nazer2009compute, nam:2009bit}.

We now consider the decoding of message $w_{1(b-1)}$ at terminal 2 after block $b$, which closely follows the backwards decoding strategy of the one-way relay channel \cite{Cover:1979}. That is, after block $b-1$ terminal 2 used the list decoder of Section \ref{sec:lattice} to produce a list of possible ${\bf t_1}(w_{1(b-1)})$, say $L(\hat{{\bf t_1}}(w_{1(b-1)}))$ of size $2^{n(R_1 - C(P_1/N_2))}$. To resolve which codeword in this list was actually sent, it uses the bin-index it receives in block $b$ from the relay, $s({\bf \hat{T}}(b-1))$. To decode this bin index, we use Xie's random binning scheme \cite{xie2007network}.  Note that given ${\bf U_1}$ and ${\bf U_2}$, for fixed ${\bf t_1}$,  ${\bf t_2}$ and ${\bf T}$ are in one-to-one correspondence, while for fixed ${\bf t_2}$,   ${\bf t_1}$ and ${\bf T}$ are in one-to-one correspondence:
\[ {\bf T} = ({\bf t_1} + {\bf t_2} - Q_2({\bf t_2} + {\bf U_2})) \mod \Lambda_1 \]
and
{\small \begin{align*}
(&{\bf T} - {\bf t_2} + Q_2({\bf t_2} + {\bf U_2}) )\mod \Lambda_1 \\
&= ( ({\bf t_1} + {\bf t_2} - Q_2({\bf t_2} + {\bf U_2})) - {\bf t_2} + Q_2({\bf t_2} + {\bf U_2}) ) \mod \Lambda_1 \\
&= {\bf t_1} \mod \Lambda_1 ={\bf t_1}
\end{align*}}
and
{\small \begin{align*}
(&{\bf T} \mod \Lambda_2 - {\bf t_1}) \mod \Lambda_2 \\
&= ( ({\bf t_1} + {\bf t_2} - Q_2({\bf t_2} + {\bf U_2}) ) \mod \Lambda_1 \mod \Lambda_2 - {\bf t_1} )\mod \Lambda_2 \\
&=  ( ({\bf t_1} + {\bf t_2} - Q_2({\bf t_2} + {\bf U_2}) ) \mod \Lambda_2 - {\bf t_1} )\mod \Lambda_2 \\
&= {\bf t_2} \mod \Lambda_2 = {\bf t_2}.
\end{align*}}
The second equality follows from ${\bf X}\mod \Lambda_1 \mod \Lambda_2 = {\bf X} \mod \Lambda_2 $ when $\Lambda_1\subseteq \Lambda_2$.
Thus, since terminal 2 knows $w_{2(b-1)}$ and consequently ${\bf t_2}(w_{2(b-1)})$, terminal 2 decodes the unique ${\bf t_1}(w_{1(b-1)})$ which in block $b$ satisfies the joint typicality check
\[ ( {\bf x_R}(s({\bf T}(b-1))), {\bf X_2}(w_b), {\bf Y_2}(b) ) \in A^{(N)}_\epsilon ({\bf X_R}, {\bf X_2}, {\bf Y_2})\]
and also in block $b-1$ belongs to the list of possible codewords $L(\hat{{\bf t_1}}(w_{1(b-1)}))$ of size $2^{n(R_1 - C(P_1/N_2))}$.
Due to the uniform, random binning performed to obtain the bin index of ${\bf \hat{T}}$, this is possible as long as
 $R_1 < I(X_R;Y_2|X_2) + C(P_1/N_2)$.
Since ${\bf Y_2} = {\bf X_R} + {\bf X_1} + {\bf Z_2}$ and the distribution of ${\bf X_1}$ (lattice code $\Lambda_1$ which is Rogers-good) approaches the Gaussian distribution with variance $P_1$ as $n \rightarrow \infty$ \cite{Erez:2004, loeliger1997averaging},
\[ I(X_R;Y_2|X_2) = \frac{1}{2} \log \left( 1 + \frac{P_R}{P_1+N_2}\right).\]
Thus,
\begin{align*}
R_1 &< I(X_R;Y_2|X_2) + C(P_1/N_2)\\
&= \frac{1}{2} \log \left( 1 + \frac{P_R + P_1}{N_2}\right).
\end{align*}
Analogous steps apply to rate $R_2$.
\end{proof}






\subsection{An improved partial finite-gap result}

The capacity region of the two-way relay channel  is known to within 1/2 bit \cite{nam:2009bit} without direct links, and to within 2 bits for some specific cases (when either the relay is better or worse than the direct links in {\it both} directions) when direct links are present \cite[Section V]{avestimehr2009capacity} per user. We note that recent ``noisy network coding'' \cite{lim2010noisy} techniques may also lead to constant gaps for this channel with direct links but to the best of our knowledge such gaps have not been published. We improve upon the partial constant gap results of \cite{avestimehr2009capacity} under similar channel conditions. That is, in the ``interesting case'' of  \cite{avestimehr2009capacity} where the direct links are weaker than the relay paths, we improve the 2 bit gap to $\frac{1}{2}\log 3$ bits when the relay is a better receiver than the two destinations (i.e. Scenario (2) in the following). Notice that Scenario (2) includes and extends upon the ``interesting'' case of \cite{avestimehr2009capacity}.
Both directions cooperatively use the direct links and relay in Scenarios (1) and (2), (eqns. \eqref{eq:R21} -- \eqref{eq:R22}); we note that, while not included here for lack of space, constant gaps are also available for reversely degraded cases (physical and stochastic) by ignoring the relay. 
We indicate the constant gaps (to the cut-set outer bound).


\medskip

$\bullet$ Scenario (1): Two-way physically degraded case, i.e. ${\bf Z_1} = {\bf Z_R} + {\bf Z_1^\prime}$ AND ${\bf Z_2} = {\bf Z_R} + {\bf Z_2^\prime}$: $\frac{1}{2}$ bit gap.

$\bullet$ Scenario (2): Two-way stochastically degraded case, i.e.  $N_1 \geq N_R$ AND $N_2 \geq N_R$:  $\frac{1}{2} \log 3$ bit gap.

\medskip




Notice that all above scenarios involve some form of symmetry in the channel conditions, essentially implying that it it either optimal to use or not use the relay in both directions.  In asymmetric scenarios, such as  for example: $N_2 \geq N_R$ AND $\min (  \frac{P_2}{N_R}, \frac{P_R}{N_1} ) \leq \frac{P_2}{N_1}$, our scheme cannot achieve within a finite gap of the outer bound because either our new rate region (eqns. \eqref{eq:R21} -- \eqref{eq:R22}) or that of the two-way AWGN channel \eqref{eq:R11} can only guarantee a finite gap in one direction ($R_1$ OR $R_2$). In particular, one disadvantage of our scheme (2) is that as both directions use the relay, the sum of the messages has to be decoded, leading to constraints on both rates $R_1$ and $R_2$, even if one link has a better direct link and wishes to ignore the relay.  The cut-set outer bound on the other hand permits the direct link to be fully exploited. 



\bigskip

\noindent
{\bf Scenario (1):}
The cut-set bound for the physically degraded Gaussian two-way relay channel is given by \eqref{eq:outd2}.
\begin{figure*}
\begin{equation}
R_i \leq R_{ODi} = \max_{ 0 \leq \alpha_i \leq 1} \min\left( \frac{1}{2} \log \left( 1 + \frac{\alpha_i P_i}{N_R} \right) \right., 
 \left.  \frac{1}{2} \log \left( 1 + \frac{P_i + P_R + 2 \sqrt{\bar{\alpha_i} P_i P_R} }{N_{\bar{i}}^\prime + N_R} \right) \right), \;\; i\in \{1,2\} . 
\label{eq:outd2} \end{equation} 
\end{figure*}
It is shown here that the achievable rates of Thm. \ref{thm:two-way} lie within 1/2 bit of this outer bound (per user). Note that

{\footnotesize
\begin{align*}
 \eqref{eq:R21} & + \frac{1}{2} \\
= & \min \left( \left[\frac{1}{2} \log \left( \frac{P_1}{P_1 + P_2} + \frac{P_1}{N_R}\right)\right]^+ , \frac{1}{2} \log\left(1+\frac{P_1+P_R}{N_2^\prime +N_R}\right)\right) + \frac{1}{2} \\
= & \min \left( \max \left( \frac{1}{2} \log \left( \frac{2P_1}{P_1 + P_2} + \frac{2P_1}{N_R}\right), \frac{1}{2}\right) , \frac{1}{2} \log\left(2+\frac{2(P_1+P_R)}{N_2^\prime +N_R}\right)\right),
\end{align*}
}
and that both terms are lower bounded by $R_{OD1}$, as

{\footnotesize\begin{align*}
& \max \left( \frac{1}{2} \log \left( \frac{2P_1}{P_1 + P_2} + \frac{2P_1}{N_R}\right), \frac{1}{2}\right) \geq \max \left( \frac{1}{2} \log \left( \frac{2P_1}{N_R}\right), \frac{1}{2}\right) \\
& \geq \frac{1}{2} \log \left( 1 + \frac{ P_1}{N_R} \right)
\geq \frac{1}{2} \log \left( 1 + \frac{\alpha_1 P_1}{N_R} \right) \geq R_{OD1}\\
\end{align*}
and
\begin{align*}
&\frac{1}{2} \log\left(2+\frac{2(P_1+P_R)}{N_2^\prime +N_R}\right) > \frac{1}{2} \log \left( 1 + \frac{P_1 + P_R + 2 \sqrt{ P_1 P_R} }{N_2^\prime + N_R} \right) \\
&\geq \frac{1}{2} \log \left( 1 + \frac{P_1 + P_R + 2 \sqrt{\bar{\alpha}_{1} P_1 P_R} }{N_2^\prime + N_R} \right)
\geq R_{OD1}.
\end{align*}
}

Thus,
$ \eqref{eq:R21} + \frac{1}{2}
\geq  R_{OD1}.$
A similar $1/2$ bit gap results for user 2's rate $\eqref{eq:R22}$. 
Scenario (2) follows in a similar manner; we note that the cut-set outer bound is no longer given by \eqref{eq:outd2}. 

\begin{figure*}
\subfigure{\includegraphics[width=2.4in]{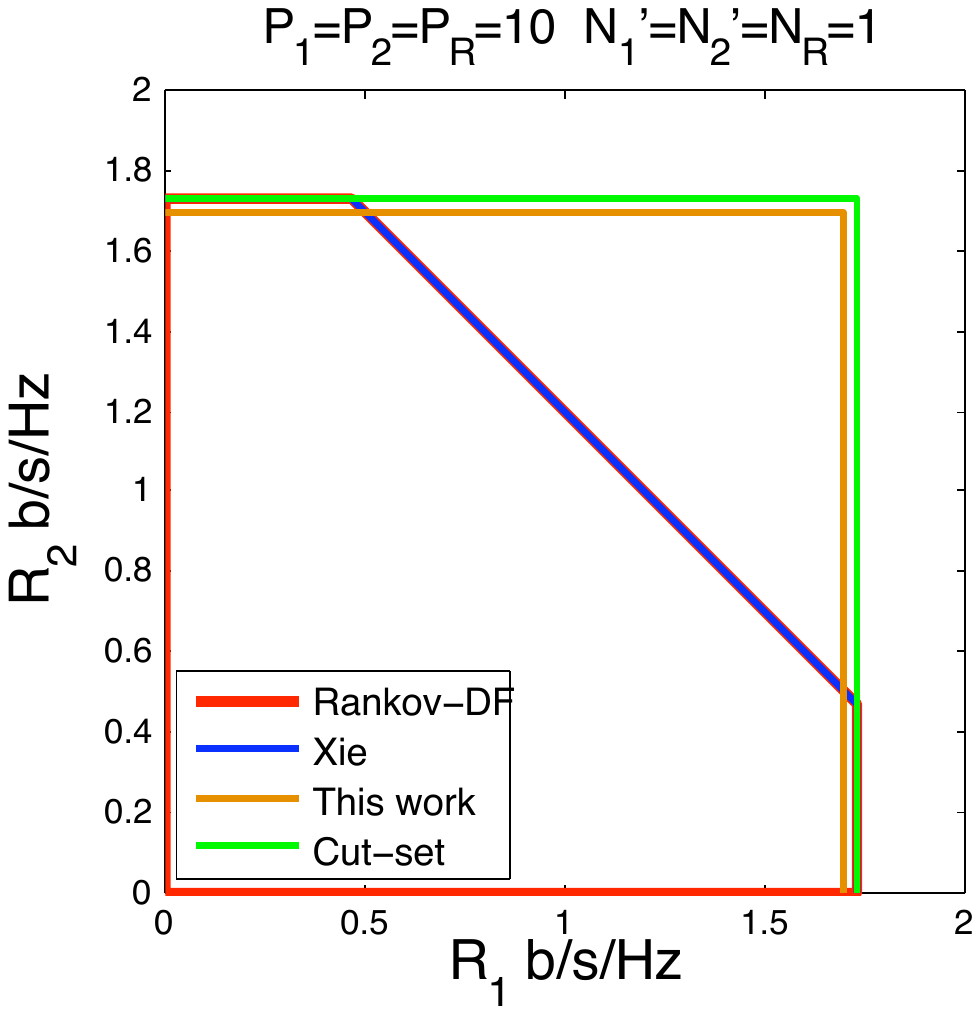}}
\subfigure{\includegraphics[width=2.4in]{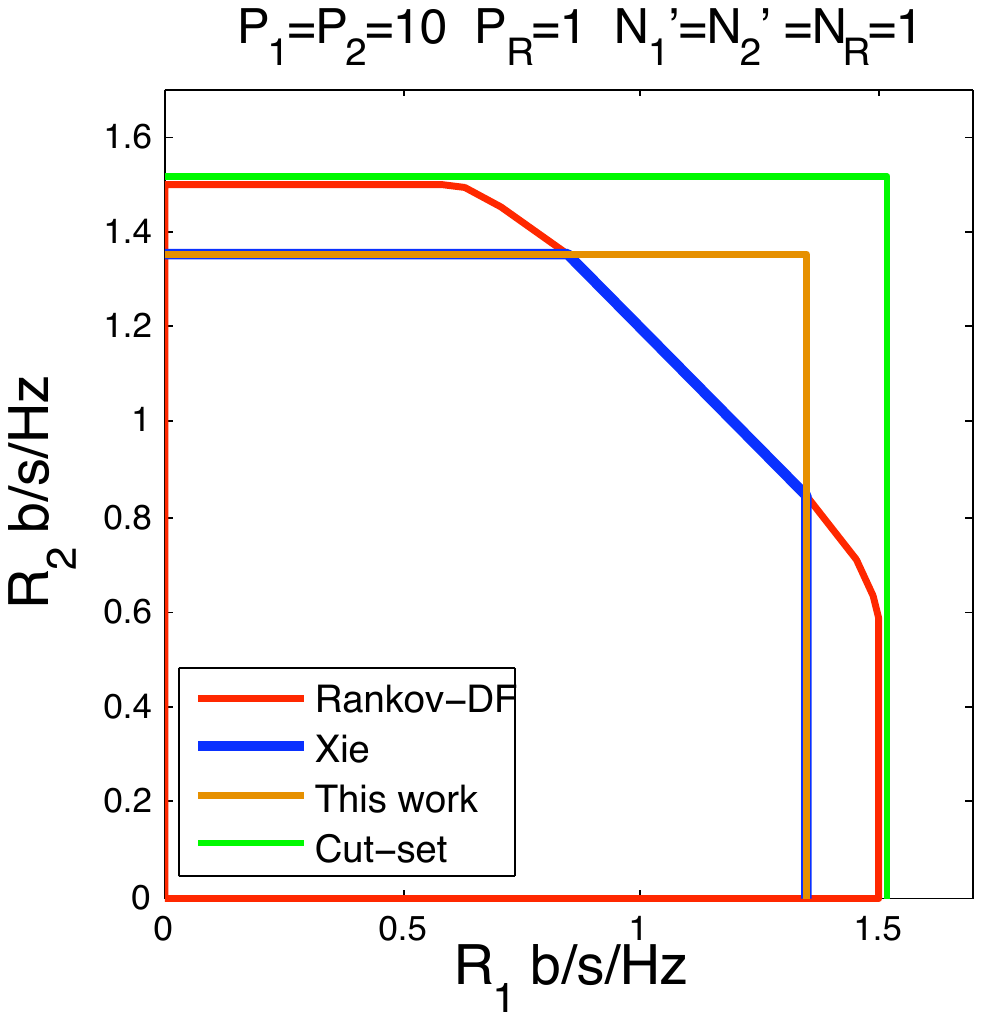}}
\subfigure{\includegraphics[width=2.4in]{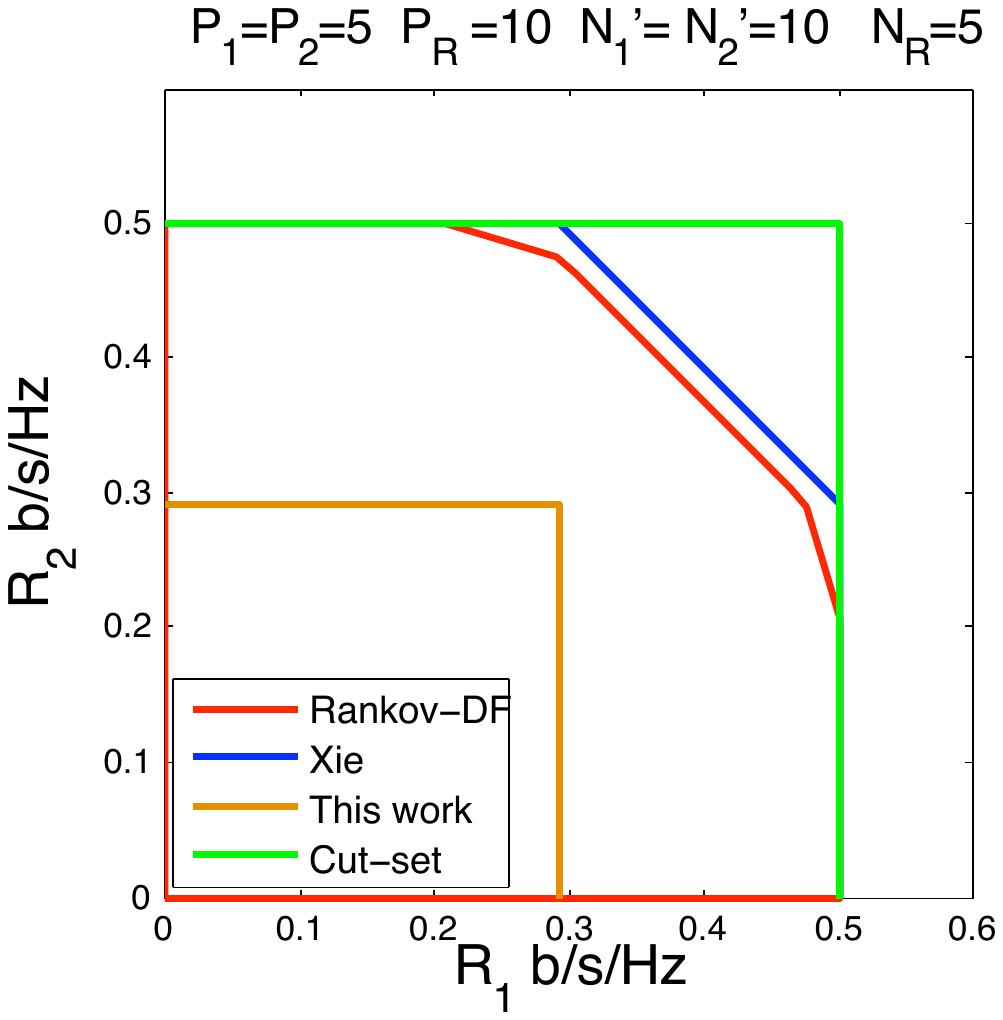}}
\caption{Comparison of decode-and-forward achievable rate regions of various two-way relay channel rate regions.}
\label{fig:numerical}
\end{figure*}

\subsection{Numerical evaluations}

We compare three achievable rate regions of decode-and-forward (DF) schemes with direct links to the cut-set outer bound in Fig. \ref{fig:numerical} for the degraded channel (in which scheme (2) is most useful): the red ``Rankov-DF'' \cite{Rankov:2006}, the blue ``Xie'' \cite{xie2007network} and our orange ``This work'' (Thm. \ref{thm:two-way}). The ``Rankov-DF'' and ``Xie'' schemes use a multiple access channel model to decode the two messages at the relay, while we use lattice codes to decode their sum, which avoids the sum rate constraint. In the broadcast phase, the ``Rankov-DF'' scheme broadcasts the superposition of the two codewords,
while ``Xie'''s and our scheme use a random binning technique to broadcast the bin index. The advantage of  the ``Rankov-DF'' scheme is its ability of obtain a coherent gain at the receiver from the source and 
relay at the cost of a reduced power for each message (power split $\alpha P$ and $(1-\alpha)P$).  Xie and our schemes both broadcast the bin index using all of the relay power.  At low SNR, the rate-gain seen by decoding the sum and avoiding the sum-rate constraint is outweighed by 1) loss seen in the rates $\frac{1}{2} \log ( \frac{P_i}{P_1 + P_2} + SNR)$ compared to $\frac{1}{2} \log ( 1 + SNR)$, or 2) the coherent gain present in the ``Rankov-DF'' scheme. At high SNR, our scheme performs well, and at least in some cases, is able to guarantee a constant gap.

\smallskip

{\bf Acknowledgements.} 
The authors would like to thank Bobak Nazer for his comments on a draft of this manuscript.
\bibliographystyle{IEEEtran}
\bibliography{refs}

\begin{thebibliography}{10}
\providecommand{\url}[1]{#1}
\csname url@rmstyle\endcsname
\providecommand{\newblock}{\relax}
\providecommand{\bibinfo}[2]{#2}
\providecommand\BIBentrySTDinterwordspacing{\spaceskip=0pt\relax}
\providecommand\BIBentryALTinterwordstretchfactor{4}
\providecommand\BIBentryALTinterwordspacing{\spaceskip=\fontdimen2\font plus
\BIBentryALTinterwordstretchfactor\fontdimen3\font minus
  \fontdimen4\font\relax}
\providecommand\BIBforeignlanguage[2]{{%
\expandafter\ifx\csname l@#1\endcsname\relax
\typeout{** WARNING: IEEEtran.bst: No hyphenation pattern has been}%
\typeout{** loaded for the language `#1'. Using the pattern for}%
\typeout{** the default language instead.}%
\else
\language=\csname l@#1\endcsname
\fi
#2}}

\bibitem{Erez:2004}
U.~Erez and R.~Zamir, ``Achieving $\frac{1}{2}\log(1+{SNR})$ on the
  {A}{W}{G}{N} channel with lattice encoding and decoding,'' \emph{IEEE Trans.
  Inf. Theory}, vol.~50, no.~10, pp. 2293--2314, Oct. 2004.

\bibitem{zamir2002nested}
R.~Zamir, S.~Shamai, and U.~Erez, ``{Nested linear/lattice codes for structured
  multiterminal binning},'' \emph{IEEE Transactions on Information Theory},
  vol.~48, no.~6, pp. 1250--1276, 2002.

\bibitem{nazer2009compute}
B.~Nazer and M.~Gastpar, ``{Compute-and-forward: Harnessing interference
  through structured codes},'' \emph{Arxiv preprint arXiv:0908.2119}, 2009.

\bibitem{bresler_tse:2008}
\BIBentryALTinterwordspacing
G.~Bresler, A.~Parekh, and D.~Tse, ``The approximate capacity of the
  many-to-one and one-to-many gaussian interference channels,'' 2008. [Online].
  Available: \url{http://arxiv.org/abs/0809.3554}
\BIBentrySTDinterwordspacing

\bibitem{jafar:very_strong_IC}
\BIBentryALTinterwordspacing
S.~Sridharan, A.~Jafarian, S.~Vishwanath, and S.~A. Jafar, ``Capacity of
  symmetric k-user gaussian very strong interference channels.'' [Online].
  Available: \url{http://arxiv.org/abs/0808.2314}
\BIBentrySTDinterwordspacing

\bibitem{Narayanan:2008}
\BIBentryALTinterwordspacing
M.~P. Wilson, K.~Narayanan, H.~Pfister, and A.~Sprintson, ``Joint physical
  layer coding and network coding for bi-directional relaying,'' 2008.
  [Online]. Available: \url{http://arxiv.org/abs/0805.0012}
\BIBentrySTDinterwordspacing

\bibitem{nam:2009bit}
\BIBentryALTinterwordspacing
W.~Nam, S.-Y. Chung, and Y.~Lee, ``Capacity of the {G}aussian two-way relay
  channel to within 1/2 bit,'' 2009. [Online]. Available:
  \url{http://arxiv.org/abs/0902.2438}
\BIBentrySTDinterwordspacing

\bibitem{Kim:sarnoff}
S.~Kim, N.~Devroye, P.~Mitran, and V.~Tarokh, ``Comparison of bi-directional
  relaying protocols,'' in \emph{Proc. IEEE Sarnoff Symposium}, Princeton, NJ,
  Apr. 2008.

\bibitem{gunduz2010multi}
D.~Gunduz, A.~Yener, A.~Goldsmith, and H.~Poor, ``{The multi-way relay
  channel},'' \url{http://arxiv.org/abs/1004.2434/}.

\bibitem{sezgin2010divide}
A.~Sezgin, A.~Avestimehr, M.~Khajehnejad, and B.~Hassibi,
  ``{Divide-and-conquer: Approaching the capacity of the two-pair bidirectional
  Gaussian relay network},'' \emph{Arxiv preprint arXiv:1001.4271}, 2010.

\bibitem{zamir-lattices}
R.~Zamir, ``{Lattices are everywhere},'' in \emph{4th Annual Workshop on
  Information Theory and its Applications, UCSD}, 2009.

\bibitem{nam:2009nested}
\BIBentryALTinterwordspacing
W.~Nam, S.-Y. Chung, and Y.~Lee, ``Nested lattice codes for gaussian relay
  networks with interference,'' 2009. [Online]. Available:
  \url{http://arxiv.org/PS cache/arxiv/pdf/0902/0902.2436v1.pdf}
\BIBentrySTDinterwordspacing

\bibitem{ozgur2010approximately}
A.~Ozgur and S.~Diggavi, ``{Approximately achieving Gaussian relay network
  capacity with lattice codes},''
  \url{http://arxiv4.library.cornell.edu/abs/1005.1284}.

\bibitem{Cover:1979}
{T. M. Cover and A. El Gamal}, ``Capacity theorems for the relay channel,''
  \emph{IEEE Trans. Inf. Theory}, vol.~25, no.~5, pp. 572--584, Sept. 1979.

\bibitem{xie2007network}
L.~Xie, ``{Network coding and random binning for multi-user channels},'' in
  \emph{Proc. of CWIT}, 2007, pp. 85--88.

\bibitem{avestimehr2009capacity}
A.~Avestimehr, A.~Sezgin, and D.~Tse, ``{Capacity of the two-way relay channel
  within a constant gap},'' \emph{European Transactions in Telecommunications},
  2009.

\bibitem{loeliger1997averaging}
H.~Loeliger, ``{Averaging bounds for lattices and linear codes},'' \emph{IEEE
  Trans. Inf. Theory}, vol.~43, no.~6, pp. 1767--1773, 1997.

\bibitem{poltyrev1994coding}
G.~Poltyrev, ``{On coding without restrictions for the AWGN channel},''
  \emph{IEEE Trans. Inf. Theory}, vol.~40, no.~2, pp. 409--417, 1994.

\bibitem{Coppel}
W.~Coppel, \emph{Number Theory: An introduction to mathematics}, 2nd~ed.\hskip
  1em plus 0.5em minus 0.4em\relax Springer, 2009.

\bibitem{Baik:2008}
I.~Baik and S.-Y. Chung, ``Network coding for two-way relay channels using
  lattices,'' in \emph{Proc. IEEE Int. Conf. Commun.}, Beijing, May 2008.

\bibitem{ong2010capacity}
L.~Ong, C.~Kellett, and S.~Johnson, ``{Capacity Theorems for the AWGN Multi-Way
  Relay Channel},'' \url{http://arxiv4.library.cornell.edu/abs/1004.2300}.

\bibitem{Han:1984}
T.~Han, ``A general coding scheme for the two-way channel,'' \emph{IEEE Trans.
  Inf. Theory}, vol. IT-30, pp. 35--44, Jan. 1984.

\bibitem{lim2010noisy}
S.~Lim, Y.~Kim, A.~El~Gamal, and S.~Chung, ``{Noisy network coding},''
  \url{http://arxiv.org/abs/1002.3188}, 2010.

\bibitem{Rankov:2006}
B.~Rankov and A.~Wittneben, ``Achievable rate regions for the two-way relay
  channel,'' in \emph{Proc. IEEE Int. Symp. Inf. Theory}, Seattle, July 2006,
  pp. 1668--1672.

\end{thebibliography}
\end{document}